\newtheorem{theorem}{Theorem}
\newtheorem{lemma}{Lemma}
\newtheorem{case}{Case}
\newtheorem{remark}{Remark}
\theoremstyle{definition}
\newtheorem{definition}{Definition}[]
\title{\LARGE \bf
On the Spatial Pattern of Input-Output Metrics for a Network Synchronization Process
}
\author{Subir Sarker and Sandip Roy
\thanks{The authors acknowledge support from ARPA-E, Pacific Northwest National Laboratory, and National Science Foundation.}
\thanks{The authors are with the School of Electrical Engineering and Computer Science, Washington State University, Pullman, WA 99163, USA.
        {\tt\small \{subir.sarker,sandip\}@wsu.edu}}%
}
\begin{document}

\maketitle
\thispagestyle{empty}
\pagestyle{empty}

\begin{abstract}

A graph-theoretic analysis is undertaken for a compendium of input-output (transfer) metrics of a standard discrete-time linear synchronization model, including $l_p$ gains, frequency responses, frequency-band energy, and Markov parameters. We show that these transfer metrics exhibit a spatial degradation, such that they are monotonically nonincreasing along vertex cutsets away from an exogenous input. We use this spatial analysis to characterize signal-to-noise ratios (SNRs) in diffusive networks driven by process noise, and to develop a notion of propagation stability for dynamical networks. Finally, the formal results are illustrated through an example.

\end{abstract}


\section{Introduction}

The literature on dynamical networks has predominantly focused on emergence and control of global internal properties such as synchronization or consensus \cite{pecora,renbeard}. Recently, several studies considered input-to-output or transfer characteristics of dynamical networks. In particular, input-output analyses have been motivated by questions regarding the disturbance responses of networks as they are scaled \cite{studli2017vehicular,besselink2018scalable,mirabilio2021scalable}; these efforts generalize notions of string stability \cite{swaroop1996string} toward general network structures. In a separate track, input-to-output analyses have also been motivated by controller design needs in infrastructure networks with sparse actuation and measurement capabilities \cite{pirani,kooreh,mahia,abad2014graph}. Additionally, questions related to security and estimability of network processes have motivated input-output analyses \cite{vosughi,pasqualetti}.

Prior input-to-output analyses of dynamical networks have been concerned with particular metrics arising from application contexts, such as $l_\infty$ gains or transfer function zeros, and have primarily been focused on tying the input-output metrics to global graph properties (e.g. \cite{abad2014graph}). Relative to these efforts, our main contribution in this short article is to demonstrate that a compendium of input-output metrics exhibit a special spatial pattern, for a canonical network synchronization/consensus model. In particular, we study a standard discrete-time model for consensus/synchronization among nodes with scalar states \cite{xiao2004fast}. For this model, we consider several input-to-output metrics between node pairs, including $l_p$ gains, frequency responses, frequency-band energy, and Markov parameters. The main outcome of our analyses is to show that these metrics satisfy a {\em decrescence property}, wherein they are nonincreasing in value across cutsets away from an input node.  Additionally, two implications of the spatial pattern analysis are developed in brief vignettes. First, the analysis is used to give some insights into signal-to-noise ratios in network measurements, which bear on estimation/detection and sensor placement. Second, the analysis is used to define a notion of input-output or propagation stability for networks, and to verify propagation stability for the canonical model regardless of model scale (size).

\section{Model and Goals}

A dynamical network with $n$ agents or nodes is considered. Each node $i=1,\hdots, n$ has a scalar state $x_i(k)$ which evolves in discrete time ($k=0,1,2,\hdots$), via linear diffusive interactions with other nodes.  Here, our primary interest is in characterizing transfer characteristics in the network, hence we also model an exogenous input $u(k)$ applied at one source node $s$ and an output $y(k)$ taken at a target node $i$.  In particular, the following single-input single-output dynamics is considered:
\begin{eqnarray}
& & X(k+1) = AX(k) + e_{s}u(k) \label{eq:1}\\
& & y(k) = e_{i}^TX(k) \nonumber
\end{eqnarray}
where $X(k)=\begin{bmatrix} x_1[k] \\ \vdots \\ x_n[k] \end{bmatrix}$, $A = [a_{ij}]$ is  assumed to be row stochastic or substochastic (i.e. $a_{ij}\geq 0$ and $\sum_{j} a_{ij} \le 1$),
and the notation $e_b$ is used for a $0--1$ indicator vector whose $b$th entry is equal to $1$.  The zero-state response of the system (\ref{eq:1}) is of interest in this study.  The model (\ref{eq:1}) is a standard model for network synchronization or consensus (for stochastic $A$), and also is representative of other diffusive processes in networks.

A digraph $\Gamma$ is used to represent pairwise interactions among the nodes in the network.  Specifically, the graph $\Gamma$ is defined to have $n$ vertices labeled $1,\hdots, n$, which correspond with the $n$ network nodes. A directed edge is drawn from vertex $j$ to vertex $l$ if $a_{lj}>0$, which indicates that the state of vertex $j$ directly influences that of vertex $l$.  The vertices $s$ and $i$ are referred to as the source and target vertices, respectively. To simplify the development, we assume throughout that $\Gamma$ is strongly connected, i.e., there is a directed path from each vertex to every other vertex.  The strongly-connected case is of primary interest when considering transfer characteristics; although the results can be simply generalized for non-strongly connected cases, these details are omitted.

Our primary aim in this study is to compare input-output metrics for the network model (\ref{eq:1}) for different target vertex locations $i$ in $\Gamma$, so as to characterize the spatial pattern of the input-output response.  The specific metrics that we consider for the system (\ref{eq:1}) for a particular target location $i$ are:
\begin{itemize} 
\item The {\bf $l_p$ gain} $G_p(i,k_f)$ over a time horizon $k_f$, defined as
the maximum of the quantity $\left[\sum_{k=0}^{k_f} |y(k)|^p\right]^\frac{1}{p}$ over all inputs 
$u(0),\hdots, u(k_f-1)$ such that $\left[\sum_{k=0}^{k_f} |u(k)|^p\right]^\frac{1}{p}=1$.  
\item The {\bf frequency response} $H_{i}(e^{j\Omega})$, where $\Omega$ is the frequency of the (discrete time) input and $H_i()$ is the system's transfer function. Additionally, the responses to arbitrary periodic inputs, and the signal content in a frequency band, are also characterized.
\item The {\bf Markov parameters} $M_i(k)={\bf e}_i^T A^k {\bf e}_s$ for $k=0,1,2,\hdots$. 
\end{itemize}
 These metrics together form the basis for the external stability analysis of linear systems, and modulate estimation and feedback controller design.

Conceptually, the diffusive structure of the dynamics (\ref{eq:1}) suggests that inputs should have a localized sphere of influence, and hence the input-output metrics should exhibit a spatial degradation with distance from the source. Our goal is to provide a formal characterization of this spatial falloff, and to develop implications of this spatial analysis.

\section{Main Results: Spatial Pattern Analysis}

Graph theoretic analyses are developed for the input-output metrics defined for the system (\ref{eq:1}). The main results show that the metrics ($l_p$ gain, frequency response, Markov parameters) for different target locations falloff monotonically along graph cutsets away from the source location.
To formalize these notions, it is convenient to define the notion of a {\bf separating cutset} for a  graph.  To do so, let us consider a network graph $\Gamma$ with a source vertex $s$ and a particular target vertex $i=q^*$.  A set of vertices $C=\{ c(1),c(2),...,c(m) \}$ (which does not contain $s$ and $q^*$) is said to be a separating cutset, if any directed path from $s$ to $q^*$ passes through at least one vertex in $C$. The concept is illustrated in Figure \ref{fig:1}.  We also find it convenient to use the notation $Z$ for the partition of $\Gamma$ containing the target vertex, upon removal of the separating cutset. We refer to $Z$ as the {\bf target partition}.

%
\begin{figure}[!htb]
    	\centering
    	\vspace{-0.2cm}
    	\includegraphics[width=0.45\textwidth]{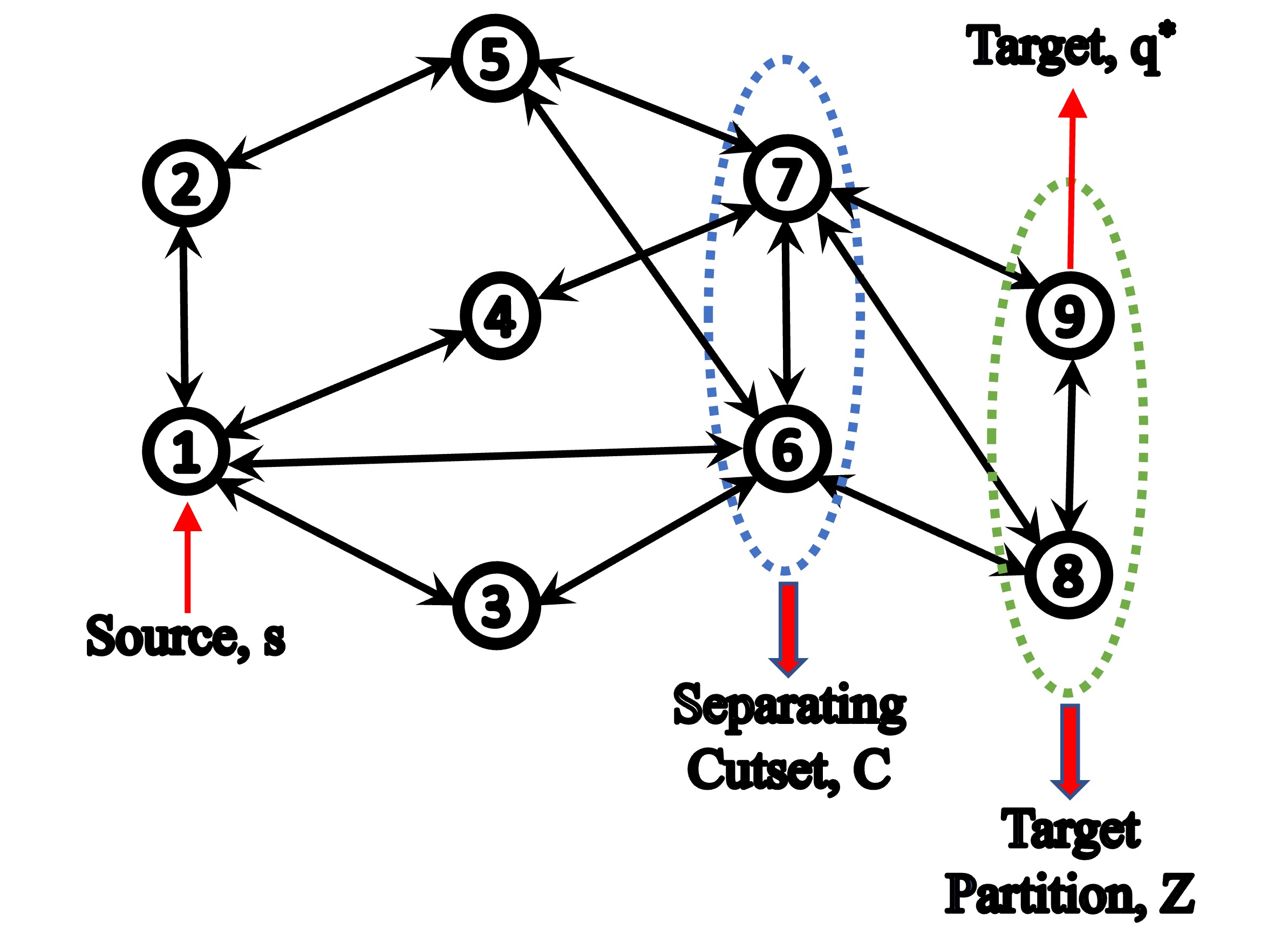}
    	\vspace{-0.2cm}
    	\caption{Illustration of a vertex cutset.}
    	\label{fig:1}
    	\vspace{-10 pt} 
\end{figure}
%


The spatial pattern analyses of the input-output metrics depend on a reformulation of system (\ref{eq:1}), wherein the target node's state dynamics are expressed in terms of the states of nodes corresponding to a separating cutset, rather than directly in terms of the input.
To develop this reformulation, let us define the {\bf cutset state vector} $X_c(k)$ as containing the time-$k$ states of the network nodes corresponding to the vertices in $C$.  Likewise, we define the {\bf target partition state vector} $X_z(k)$ as containing the time-$k$ states corresponding to the vertices in $Z$ (including $q^*$). The dynamical update of the target partition state vector can be expressed in terms of only the current cutset state and target partition state:
\begin{equation}\label{eq:2}
X_{z}(k+1) = A_zX_z(k) + B X_c(k)
\end{equation}
where $A_z$ is the principal submatrix of $A$ formed by selecting the rows and columns of $A$ indicated in $Z$; and $B$ is a submatrix of $A$ formed by selecting the rows indicated in $Z$ and the columns indicated in $C$.


The main graph-theoretic results build on the following characterization of the reformulated state dynamics (\ref{eq:2}):

\begin{lemma}
Assume that the network model (1) is initially relaxed and an exogenous input $u(k)$ is applied at the source node $s$.  Consider any separating cutset $C$ and corresponding target partition $Z$.
  The target partition state vector over the interval $k=0,1,\hdots, k_f$ can be expressed in terms of the cutset state vector as follows:

\begin{equation}\label{eq:3}
{\small \begin{bsmallmatrix} X_{z}(0)\\ \vdots \\ X_{z}(k_f) \end{bsmallmatrix}} = Q_{k_f}{\small \begin{bsmallmatrix} X_{c}(k_f)\\ \vdots \\X_{z}(0) \end{bsmallmatrix}},
\end{equation}

\noindent where 
\begin{equation}\label{eq:4}
Q_k = {\small \begin{bmatrix}
0&\dots&0&0\\
0&\dots&0&B\\
0&\dots&B&A_zB\\
\vdots&\ddots&\vdots & \vdots\\
0&\dots&A_z^{k-2}B&A_z^{k-1}B
\end{bmatrix}}.
\end{equation}
Further, the matrix $Q_{k_f}$ has nonnegative entries, and the sum of the entries in each row is at most $1$.
\end{lemma}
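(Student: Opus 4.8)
The plan is to prove the three assertions---the block representation \eqref{eq:3}--\eqref{eq:4}, entrywise nonnegativity of $Q_{k_f}$, and the row-sum bound---by solving the reduced recursion \eqref{eq:2} in closed form and then exploiting the nonnegative, substochastic structure that $A_z$ and $B$ inherit from $A$.

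First I would establish the representation. Taking \eqref{eq:2} as given and using the relaxed initial condition $X_z(0)=0$, I would unroll the recursion to obtain $X_z(k)=\sum_{j=0}^{k-1}A_z^{\,k-1-j}B\,X_c(j)$, which a one-line induction on $k$ confirms. Stacking these expressions for $k=0,\dots,k_f$ and listing the cutset states in the reverse-time order $X_c(k_f),\dots,X_c(0)$ appearing on the right-hand side of \eqref{eq:3}, the block multiplying $X_c(j)$ in the equation for $X_z(k)$ is precisely $A_z^{\,k-1-j}B$; reading these off as the block entries reproduces the staircase pattern \eqref{eq:4}, with zero blocks above the shifted diagonal and powers $A_z^{\,m}B$ below it. This step is essentially bookkeeping, the only care needed being to align the reverse-time column ordering with the exponent of $A_z$ that appears. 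Nonnegativity is then immediate: since $A\ge 0$ entrywise and $A_z$, $B$ are submatrices of $A$, we have $A_z\ge 0$ and $B\ge 0$, so every block $A_z^{\,m}B\ge 0$ while the remaining blocks vanish, giving $Q_{k_f}\ge 0$.

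The row-sum bound is the crux, and is where I expect the real work to lie. The key observation is a local substochasticity inequality: because $C$ is a \emph{separating} cutset, no edge enters the target partition $Z$ from the source side, so the block row of $A$ indexed by $Z$ has the form $[\,A_z\ \ B\ \ 0\,]$, and the global condition $A\mathbf{1}\le\mathbf{1}$ restricts to $A_z\mathbf{1}+B\mathbf{1}\le\mathbf{1}$, where $\mathbf{1}$ is an all-ones vector of the appropriate size. The vector of row sums of block row $k$ of $Q_{k_f}$ is $S_k:=\sum_{m=0}^{k-1}A_z^{\,m}B\,\mathbf{1}$. Rewriting the inequality as $B\mathbf{1}\le(I-A_z)\mathbf{1}$ and left-multiplying by the nonnegative matrix $A_z^{\,m}$ yields $A_z^{\,m}B\mathbf{1}\le(A_z^{\,m}-A_z^{\,m+1})\mathbf{1}$; summing over $m=0,\dots,k-1$ telescopes to $S_k\le(I-A_z^{\,k})\mathbf{1}=\mathbf{1}-A_z^{\,k}\mathbf{1}\le\mathbf{1}$, since $A_z^{\,k}\mathbf{1}\ge 0$. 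Hence every row sum of $Q_{k_f}$ is at most $1$, completing the proof. The main obstacle is not any single computation but recognizing that the separating-cutset hypothesis is exactly what converts the global substochasticity of $A$ into the local inequality $A_z\mathbf{1}+B\mathbf{1}\le\mathbf{1}$ that drives the telescoping bound; equivalently, the induction $S_{k+1}=B\mathbf{1}+A_zS_k\le B\mathbf{1}+A_z\mathbf{1}\le\mathbf{1}$ closes the argument.
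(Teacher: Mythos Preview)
Your proof is correct and follows essentially the same line as the paper: solve the reduced recursion (2) explicitly for $X_z(k)$, stack the results, observe that nonnegativity is inherited from $A$, and bound the row sums using the substochasticity of $[A_z\ B]$. The only presentational difference is in the last step: the paper embeds $[A_z\ B]$ into the auxiliary block matrix $F=\begin{bmatrix}A_z & B\\ 0 & I\end{bmatrix}$, notes that $F$ (and hence $F^{k_f}$) is substochastic, and reads off $\sum_{m=0}^{k_f-1}A_z^{m}B$ as the upper-right block of $F^{k_f}$, whereas you telescope $A_z^{m}B\mathbf{1}\le (A_z^{m}-A_z^{m+1})\mathbf{1}$ directly; the two computations are equivalent.

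One small expository point: the inequality $A_z\mathbf{1}+B\mathbf{1}\le\mathbf{1}$ that drives your telescoping already follows from $[A_z\ B]$ being a submatrix of the nonnegative substochastic matrix $A$ (dropping columns can only decrease nonnegative row sums), so it does not itself require the separating-cutset hypothesis. That hypothesis is instead what guarantees the zero block in $[A_z\ B\ 0]$ and hence the validity of the reduced dynamics (2), which you correctly take as given.
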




%
%


%
%


%
%


%
\begin{proof}

The target partition state vector can be computed in terms of the cutset state vector by solving (\ref{eq:2}).  This gives:
\begin{equation}\label{eq:5}
X_{z}(k) = \sum_{j = 0}^{k-1} A_z^{j}BX_c(k-j-1)
\end{equation}
Assembling the responses for $k=0,\hdots, k_f$ then immediately yields Equations (\ref{eq:3}) and (\ref{eq:4}) in the lemma statement.

%

%


%
%
%
%

The entries in $Q_{k_f}$ are nonnegative, as $A_z$ and $B$ are nonnegative.
To characterize the row sums for $Q_{k_f}$, let us first consider the last block of rows $\hat{Q}_{k_f} = [0  \quad B \hdots A_z^{k_f-2}B  \quad  A_z^{k_f-1}B]$. We characterize the row sums of $\hat{Q}_{k_f}$ by considering powers of the following matrix $F$:

\begin{equation*}
F = {\small \begin{bmatrix} A_z&B\\
        \mathbf{0} & I_m\end{bmatrix}}
\end{equation*}
The $k_f$th power of matrix $F$ is given by:
\begin{equation*}
F^{k_f} = {\small \begin{bmatrix} A^{k_f}_{z}& \sum_{k=0}^{k_f-1}A^{k}_{z}B\\
        \mathbf{0} & I_m\end{bmatrix}}
\end{equation*}

Since $[A_z \quad B]$ is a submatrix of $A$, it is immediate that $F$ is a stochastic or substochastic matrix. Consequently, the row sums of $F^{k_f}$ are also at most $1$. From the expression for $F^{k_f}$, it follows that the matrix $\sum_{k=0}^{k=k_f-1} A_z^kB$ has row sums of at most $1$. However, the row sums of $\sum_{k=0}^{k=k_f-1} A_z^kB$ and the row sums of the matrix $\hat{Q}_{k_f}$ are identical. Hence, the row sums of $\hat{Q}_{k_f}$ are upper bounded by $1$. Further, the sum of each row of the matrix $Q_{k_f}$ is upper bounded by one of the row sums of $\hat{Q}_{k_f}$.  The result thus follows.


\end{proof}

Our first main result is a graph-theoretic characterization of the $l_p$  gains of the input-output model (\ref{eq:1}):

\begin{theorem}
Consider the $l_p$ gain of the network model  (\ref{eq:1}) for a particular target node $i=q^*$, i.e. $G_p(q^*,k_f)$. Also consider the $l_p$ gains when the target node is alternately on a  separating cutset $C=\{ c(1),c(2),...,c(m) \}$, i.e. $G_p(c(i),k_f)$. For any time horizon $k_f$, it holds that $G_p(q^*,k_f)\leq G_p(c(i),k_f)$ for some $i=1,2,\hdots,m$.
\end{theorem}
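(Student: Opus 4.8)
The plan is to reduce the theorem to a pointwise-input statement and then exploit the substochasticity structure established in Lemma~1. Observe first that it suffices to show $G_p(q^*,k_f)\le \max_i G_p(c(i),k_f)$, since the maximizing index then serves as the required $i$. Let $u$ be any input of unit $l_p$ norm, and let $y_{q^*}(\cdot)$ and $y_{c(i)}(\cdot)$ denote the corresponding zero-state outputs measured respectively at the target $q^*$ and at each cutset vertex $c(i)$, \emph{all driven by the same} $u$. Because each $y_{c(i)}$ is generated by a unit-norm input, $\|y_{c(i)}\|_p\le G_p(c(i),k_f)$. Hence if I can establish the purely trajectory-level inequality $\|y_{q^*}\|_p\le\max_i\|y_{c(i)}\|_p$ for every fixed $u$, then taking $u$ to be a maximizer of the target gain (which exists, the gain being an operator norm on a finite-dimensional space) gives $G_p(q^*,k_f)=\|y_{q^*}\|_p\le\max_i\|y_{c(i)}\|_p\le\max_i G_p(c(i),k_f)$, which is the claim.

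To prove the trajectory inequality I would invoke Lemma~1, which writes the stacked target-partition states as $Q_{k_f}$ times the stacked cutset states. Reading off the row of $Q_{k_f}$ corresponding to vertex $q^*$ at time $k$ yields $y_{q^*}(k)=\sum_{i,\tau}\alpha^{(k)}_{i,\tau}\,y_{c(i)}(\tau)$, where $\alpha^{(k)}_{i,\tau}=[A_z^{\,k-\tau-1}B]_{q^*,i}$ for $\tau<k$ and $0$ otherwise; the nonnegativity and row-sum bound from Lemma~1 give $\alpha^{(k)}_{i,\tau}\ge 0$ and $\sum_{i,\tau}\alpha^{(k)}_{i,\tau}\le 1$. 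Since $t\mapsto t^p$ is convex for $p\ge 1$ and the weights form a sub-probability vector, Jensen's inequality yields $|y_{q^*}(k)|^p\le\sum_{i,\tau}\alpha^{(k)}_{i,\tau}\,|y_{c(i)}(\tau)|^p$. Summing over $k$ produces the aggregated weights $c_{i,\tau}:=\sum_k\alpha^{(k)}_{i,\tau}=\big[\sum_{d=0}^{k_f-\tau-1}A_z^{\,d}B\big]_{q^*,i}$.

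The main obstacle is that these aggregated weights depend on \emph{both} the cutset vertex $i$ and the time $\tau$, so a naive bound only yields $\sum_\tau\max_i|y_{c(i)}(\tau)|^p$, a time-varying maximum that need not correspond to any single cutset vertex and is therefore too weak for the stated conclusion. The resolution I would use is a monotonicity observation: the partial sum $\sum_{d=0}^{k_f-\tau-1}A_z^{\,d}B$ has nonnegative terms and fewer of them as $\tau$ grows, so $c_{i,\tau}\le c_{i,0}$ for every $\tau$, while $c_{i,0}=\big[\sum_{d=0}^{k_f-1}A_z^{\,d}B\big]_{q^*,i}$ are exactly the entries of the row-sum-$\le 1$ matrix identified in the proof of Lemma~1 (via the stochastic matrix $F$). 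Consequently $\sum_i c_{i,0}\le 1$, and the time dependence can be factored out entirely.

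Putting the pieces together, $\sum_k|y_{q^*}(k)|^p\le\sum_{i,\tau}c_{i,0}\,|y_{c(i)}(\tau)|^p=\sum_i c_{i,0}\,\|y_{c(i)}\|_p^p\le\big(\sum_i c_{i,0}\big)\max_i\|y_{c(i)}\|_p^p\le\max_i\|y_{c(i)}\|_p^p$, and taking $p$-th roots delivers $\|y_{q^*}\|_p\le\max_i\|y_{c(i)}\|_p$, completing the argument; the case $p=\infty$ follows by the same steps with the sum over $k$ replaced by a maximum. I expect the monotonicity-and-factoring step to be the only genuinely delicate part, as everything else is bookkeeping on the Lemma~1 decomposition together with convexity.
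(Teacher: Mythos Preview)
Your argument is correct, and it takes a genuinely different route from the paper's. The paper splits into two cases: for $p=\infty$ it uses Lemma~1 exactly as you do, but for finite $p$ it never invokes Lemma~1 at all. Instead it proves a \emph{local} inequality at every non-source vertex $q$, namely $(G_p(q,k_f))^p\le\sum_{j\in N(q)}W_j(G_p(j,k_f))^p$ with $W_j=a_{qj}/(1-a_{qq})$, obtained by applying convexity of $|\cdot|^p$ to the one-step update, summing in $k$, and absorbing the diagonal term. From this sub-averaging property the paper walks from $q^*$ toward the cutset along neighbors of nondecreasing gain. Your approach is more unified: a single Lemma~1-based computation covers all $p\in[1,\infty]$, and the monotonicity observation $c_{i,\tau}\le c_{i,0}$ together with $\sum_i c_{i,0}\le 1$ neatly removes the time index that would otherwise obstruct the bound. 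What the paper's route buys is the intermediate neighbor inequality~(\ref{eq:13})/(\ref{eq:14}), which it reuses verbatim in the proof of Theorem~4 (frequency-band energy), and an argument that does not require assembling the full block-Toeplitz matrix $Q_{k_f}$. What your route buys is a cleaner, case-free proof that avoids the somewhat informal ``iterate until you hit the cutset'' step and yields the trajectory-level inequality $\|y_{q^*}\|_p\le\max_i\|y_{c(i)}\|_p$ directly for every input.
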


\begin{proof}

%
%
%
%


Theorem 1 is verified for finite $p$ (i.e., $1 \le p < \infty$), and then for $p \to \infty$.


\begin{case}
$1\leq p<\infty$
\end{case}

The state of any node $q$ other than the source $s$ evolves as:
\begin{equation}\label{eq:8}
x_{q}(k+1) = a_{qq}x_q(k) + \sum_{j \in N(q)} a_{qj}x_{j}(k)
\end{equation}
\noindent where $N(q)$ contains the upstream neighbors of vertex $q$ in  $\Gamma$ (the vertices with directed edges to $q$). 
Notice that $a_{qj} > 0$ for $j \in {\cal N}(q)$ and $\sum_{j=1}^{n} a_{qj} \leq 1$.


%
%

%
%


%
%





For any $1\leq p< \infty$, the function $f(z) = |z|^p$ is convex over all $z$. Considering $z = x_q(k+1)$, it follows immediately that 

\begin{equation}
\begin{split}
|x_{q}(k+1)|^p \leq a_{qq}|x_q(k)|^p + \sum_{j \in N(q)} a_{qj}|x_{j}(k)|^p
\end{split}
\end{equation}

Summing the inequality for $k=0,\hdots, k_f$ yields:
\begin{equation}\label{eq:10}
\sum_{k = 0}^{k_f}|x_{q}(k+1)|^p \leq a_{qq} \sum_{k = 0}^{k_f} |x_q(k)|^p +  \sum_{j \in N(q)} a_{qj}\sum_{k = 0}^{k_f} |x_{j}(k)|^p
\end{equation}
Rewriting the leftmost sum and using $x_q(0)=0$, we get:
\begin{multline}\label{eq:11}
|x_{q}(k_f+1)|^p+\sum_{k = 0}^{k_f}|x_{q}(k)|^p \leq a_{qq} \sum_{k = 0}^{k_f} |x_q(k)|^p +  \\
\sum_{j \in N(q)} a_{qj}\sum_{k = 0}^{k_f} |x_{j}(k)|^p
\end{multline}

As $|x_{q}(k_f+1)|^p$ is nonnegative, we then obtain:
\begin{equation}\label{eq:12}
\sum_{k = 0}^{k_f}|x_{q}(k)|^p \leq \sum_{j \in N(q)} \frac{a_{qj}}{1-a_{qq}}\sum_{k = 0}^{k_f} |x_{j}(k)|^p
\end{equation}

Let us define $W_j = \frac{a_{qj}}{1-a_{qq}}$. Notice that $W_j > 0$ and $\sum_{j = 1}^{n} W_j \leq 1$. In this notation, the equation (\ref{eq:12}) can be written as:

\begin{equation}\label{eq:13}
\sum_{k = 0}^{k_f}|x_{q}(k)|^p \leq \sum_{j \in N(q)} W_{j}\sum_{k = 0}^{k_f} |x_{j}(k)|^p
\end{equation}

The term $\sum_{k = 0}^{k_f}|x_{j}(k)|^p$ is the $p$th power of the $p$-norm of the signal $x_{j}(k)$ over the interval $[0,k_f]$. Since equation (\ref{eq:13}) holds for any input, it follows that the $l_p$ gains satisfy the following for any node $q$ other than the source $s$:
\begin{equation}\label{eq:14}
(G_p(q,k_f))^p \leq \sum_{j \in N(q)} W_{j} (G_p(j,k_f))^p
\end{equation}

Now consider the target vertex $q^*$. From equation (\ref{eq:14}), $G_p(q^*,k_f) \leq G_p(j,k_f)$ for some $j\in N(q^*)$, and further the inequality is strict unless $G_p(j,k_f)=G_p(q^*,k_f)$ for all $j \in N(q^*)$. Then choose a neighbor $\hat{j}$ for which $G_p(j,k_f)$ is maximum. Repeating this argument for $\hat{j}$, we see that there is a vertex other than $q^*$ and $\hat{j}$, say $\overline{j}$, such that $G_p(\overline{j},k_f) \leq G_p(\hat{j},k_f)$. Iterating,  we finally get that $G_p(q^*,k_f)\leq G_p(c(i),k_f)$ for some $i=1,2,\hdots,m$.

\begin{case}
$p \rightarrow \infty$
\end{case}

The proof follows readily from Lemma 1. Specifically, consider the separating cutset $C$ and the resulting target partition $Z$.  Applying Equation (\ref{eq:3}) in Lemma 1 and then taking the infinity norm of both sides of Equation (\ref{eq:3}), we obtain:
%
%
%
%
%
\begin{equation}\label{eq:15}
\|{\small \begin{bmatrix} X_z(0) \\ \vdots \\ X_z(k_f) \end{bmatrix}}\|_\infty \leq \|Q_{k_f}\|_\infty
\|{\small \begin{bmatrix} X_c(k_f)\\ \vdots \\ X_c(0) \end{bmatrix}}\|_\infty
\end{equation}
The maximum row sum of $Q_{k_f}$ is $1$ and hence $ \|Q_{k_f}\|_\infty \leq 1$. As a result, we get the following inequality:
\begin{equation}\label{eq:16}
\|{\small \begin{bmatrix} X_z(0) \\ \vdots \\ X_z(k_f) \end{bmatrix}}\|_\infty \leq
\|{\small \begin{bmatrix} X_c(k_f)\\ \vdots \\ X_c(0) \end{bmatrix}}\|_\infty
 \end{equation}
Equation (\ref{eq:16}) holds for all inputs $u(k)$, including the input that maximizes the $\infty$-norm when the target node is $q^*$.  Since $q^* \in Z$, it follows immediately from (\ref{eq:16}) that $G_p(q^*,k_f)$ is upper bounded by the maximum magnitude entry in $ \begin{bmatrix} X_c(k_f)\\ \vdots \\ X_c(0) \end{bmatrix}$ for this input.  Therefore,  $G_p(q^*,k_f)\leq G_p(c(i),k_f)$ for some $i=1,2,\hdots,m$.


\end{proof}

\begin{remark}
The spatial pattern on the $l_p$ gains also holds in the infinite-horizon limit ($G_p(q^*,k_f)$ for $k_f \rightarrow \infty$), provided that the gain is finite.  A finite gain is achieved if $A$ is strictly substochastic (i.e., at least one row sum is strictly less than $1$), since the matrix $A$ is irreducible (the graph $\Gamma$ is strongly connected) by assumption.
\end{remark}

\begin{remark}
Per Theorem 1, the spatial degradation of the $l_p$ norms holds for model (\ref{eq:1}) for any input signal. It follows that the spatial result of the $l_p$ gains also holds for the closed-loop system, when a feedback is applied from the target to the source node.  
\end{remark}

\begin{remark}
The proof also trivially extends to mixed-norm gains, since the spatial pattern holds for any input. 
\end{remark}

Next, the response of the input-output model (Equation 1) to a periodic input, and hence the frequency response, is also verified to exhibit a spatial degradation:

\begin{theorem}
Consider the zero-state (i.e. $X(0)=0$) response of the network input-output model (\ref{eq:1}), under the  assumption that the matrix $A$ is strictly substochastic (i.e. at least one row sum is strictly less than $1$).  Consider a particular target vertex $i=q^*$. Also, consider any separating  cutset $C=\{ c(1),c(2),...,c(m) \} $. Then the following statements are true:

\begin{enumerate}[label=(\roman*)]
\item For a periodic input $u(k)$, the response at the target vertex $q^*$, $x_{q^*}(k)\leq \max \{0, \max_{\widehat{k}=0,1,\hdots,k} x_{c(i)}(\widehat{k})\}$ for some $i=1,\hdots,m$. Also, $x_{q^*}(k)\geq \min \{0, \min_{\widehat{k}=0,1,\hdots,k} x_{c(j)}(\widehat{k})\}$ for some $j=1,\hdots,m$.

\item At each frequency $\Omega$, $|H_{q^*}(e^{j\Omega})| \leq |H_{c(i)}(e^{j\Omega})|$ for some $i=1,\hdots, m$, where $H_{b}(j \omega)$ is the frequency response when the output is taken at vertex $b$.
\end{enumerate}

\end{theorem}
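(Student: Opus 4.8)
The plan is to obtain both statements from the cutset reformulation of Lemma 1, whose central feature is that each target-partition state is a nonnegatively weighted combination of cutset states with weights summing to at most one.

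For statement (i), I would start from Equation (\ref{eq:5}), which gives $X_z(k) = \sum_{j=0}^{k-1} A_z^j B\, X_c(k-j-1)$. Reading off the row corresponding to $q^*$, the scalar $x_{q^*}(k)$ is a linear combination of the cutset states $x_{c(i)}(\hat{k})$ over $i=1,\ldots,m$ and $\hat{k}=0,\ldots,k-1$, with coefficients that are nonnegative (since $A_z,B\ge 0$) and sum to at most one (the block-row-sum bound of Lemma 1). The upper bound then follows by a short sign argument: writing $M=\max_{i,\hat{k}} x_{c(i)}(\hat{k})$, if $M\ge 0$ the combination is at most $M$ because the weights are nonnegative and sum to at most one, whereas if $M<0$ every term is negative and the combination is at most $0$; in either case $x_{q^*}(k)\le\max\{0,M\}$, and selecting the index $i$ attaining $M$ yields the stated form. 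The lower bound is the mirror image, run with $\mu=\min_{i,\hat{k}} x_{c(i)}(\hat{k})$ and the same two weight properties. I would stress that this part uses nothing about periodicity and holds for arbitrary inputs; the ``periodic input'' hypothesis only sets up the specialization in (ii).

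For statement (ii), I would specialize (i) to a sinusoidal input $u(k)=\cos(\Omega k)$ and pass to steady state. Strict substochasticity is essential here: since $A$ is irreducible (strong connectivity) and strictly substochastic, $\rho(A)<1$, and because the spectral radius of a nonnegative matrix is monotone under taking principal submatrices, $\rho(A_z)<1$; hence every node has a well-defined steady-state oscillation, with node $b$ oscillating at amplitude $|H_b(e^{j\Omega})|$. Evaluating (i) at a time $k$ where the target oscillation attains its peak $x_{q^*}(k)=|H_{q^*}(e^{j\Omega})|\ge 0$, the upper bound becomes $|H_{q^*}(e^{j\Omega})|\le\max\{0,\max_{\hat{k}\le k} x_{c(i)}(\hat{k})\}$ for some $i$, and since the running maximum of the cutset oscillation never exceeds its amplitude $|H_{c(i)}(e^{j\Omega})|$, this gives $|H_{q^*}(e^{j\Omega})|\le|H_{c(i)}(e^{j\Omega})|$.

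I expect the main obstacle to be the bookkeeping in (ii): statement (i) supplies, for each $k$, an index $i$ that may vary with $k$, so one must argue via a limit $k\to\infty$ (with transients decayed, using the Schur stability from strict substochasticity) that the target peak is dominated by a single cutset node's steady-state amplitude. A clean alternative that avoids the running-maximum limit is to work directly with phasors: solving $(e^{j\Omega}I-A_z)\tilde{X}_z = B\tilde{X}_c$ gives $\tilde{x}_{q^*}=e_{q^*}^T(e^{j\Omega}I-A_z)^{-1}B\,\tilde{X}_c=\sum_i \alpha_i\tilde{x}_{c(i)}$, where $|\alpha_i|\le [e_{q^*}^T(I-A_z)^{-1}B]_i$ by the triangle inequality and hence $\sum_i|\alpha_i|\le 1$ by the infinite-horizon row-sum bound of Lemma 1; one application of the triangle inequality then yields $|H_{q^*}(e^{j\Omega})|=|\tilde{x}_{q^*}|\le\max_i|\tilde{x}_{c(i)}|=\max_i|H_{c(i)}(e^{j\Omega})|$. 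I would present the time-domain route to keep the argument parallel to (i), but fall back on this phasor computation if the peak-alignment step becomes awkward.
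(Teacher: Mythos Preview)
Your argument for part (i) is essentially identical to the paper's: both read off the $q^*$-row of Equation (\ref{eq:5}) and invoke the nonnegativity and row-sum bound from Lemma 1. Your observation that periodicity is not actually used in (i) is correct; the paper only invokes it to ensure boundedness before specializing to (ii).

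For part (ii), your primary route and the paper's diverge in how they handle transients. You are right that applying (i) directly to the zero-state response to $u(k)=\cos(\Omega k)$ is problematic: the running maximum $\max_{\hat{k}\le k} x_{c(i)}(\hat{k})$ is taken over the \emph{full} cutset response, and transient overshoot can push this above $|H_{c(i)}(e^{j\Omega})|$, so the inequality you extract need not reduce to the desired amplitude bound. The paper sidesteps this by a small trick you did not use: rather than applying (i) to the original system, it drives the target-partition subsystem (\ref{eq:2}) with the \emph{pure steady-state sinusoids} $x_{c(i),SS}(k)=|H_{c(i)}(e^{j\Omega})|\cos(\Omega k+\angle H_{c(i)})$ at the cutset. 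In this auxiliary problem the cutset ``inputs'' are bounded exactly by $|H_{c(i)}(e^{j\Omega})|$ by construction, the steady-state output at $q^*$ coincides with that of the original model, and the argument of (i) applied for large $k$ yields the claim.

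Your phasor fallback is a genuinely different and valid alternative: writing $\tilde{X}_z=(e^{j\Omega}I-A_z)^{-1}B\tilde{X}_c$, expanding the resolvent as $\sum_{k\ge 0} e^{-j\Omega(k+1)}A_z^k$, and bounding coefficient magnitudes entrywise by $e_{q^*}^T(I-A_z)^{-1}B$ (whose row sum is at most one by the $k_f\to\infty$ limit of Lemma 1) gives $\sum_i|\alpha_i|\le 1$ and hence $|H_{q^*}|\le\max_i|H_{c(i)}|$ in one line. This avoids all peak-alignment and transient bookkeeping, at the cost of working over $\mathbb{C}$; the paper's route stays in the time domain and reuses the machinery of (i) verbatim but needs the auxiliary-driving idea. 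Either is fine; just be explicit that your \emph{primary} time-domain route requires the paper's trick (or an equivalent device) to close, and that the phasor computation is your actual proof of (ii).
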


\begin{proof}

Since $A$ is substochastic and also irreducible ($\Gamma$ is strongly connected) by assumption, the system is asymptotically stable in the sense of Lyapunov. Hence, the response to a periodic input is periodic in the asymptote, and bounded for all time. 


Let us consider the solution of equation (\ref{eq:2}) at time $k$:
\begin{equation}\label{eq:17}
X_{z}(k) = {\small \begin{bmatrix} 0&B&A_zB& \hdots A_z^{k-1}B \end{bmatrix}} {\small \begin{bmatrix} X_c(k)\\\vdots\\ X_c(0) \end{bmatrix}}
\end{equation}

Since the target vertex $q^*$ is in the target vertex set $X_z(k)$, we can write the corresponding response as follows:
\begin{equation}\label{eq:18}
x_{q^*}(k) = {\small \begin{bmatrix} 0&B&A_zB& \hdots A_z^{k-1}B \end{bmatrix}}_{q^*} {\small \begin{bmatrix} X_c(k)\\\vdots\\ X_c(0) \end{bmatrix}}
\end{equation}
where ${\small \begin{bmatrix} 0&B&A_zB& \hdots A_z^{k-1}B \end{bmatrix}}_{q^*}$ is the row of ${\small \begin{bmatrix} 0&B&A_zB& \hdots A_z^{k-1}B \end{bmatrix}}$ corresponding to the target vertex.

Notice that the row sum of ${\small \begin{bmatrix} 0&B&A_zB& \hdots A_z^{k-1}B \end{bmatrix}}$ is less than or equal to unity. Therefore, it is immediate that for a periodic input, $x_{q^*}(k)\leq \max \{0, \max_{\widehat{k}=0,1,\hdots,k} x_{c(i)}(\widehat{k})\}$ for all $k$ and some $i=1,\hdots,m$. Similarly, the lower bound of the target state can be characterized in terms of cutset state as $x_{q^*}(k)\geq \min \{0, \min_{\widehat{k}=0,1,\hdots,k} x_{c(i)}(\widehat{k})\}$, for some $i=1,\hdots,m$.



To prove part (ii), consider the response of system (\ref{eq:1}) for a sinusoidal input $u(k) = \cos{\Omega k}$. The response at the each node has a sinusoidal steady-state component at the driving frequency $\Omega$, plus a transient component.  The sinusoidal component of the response at the target node is given by
\begin{equation}\label{eq:19}
x_{q^*,SS}(k) = |H_{q^*}(e^{j\Omega})|\cos({\Omega k + \angle{H_{q^*}(e^{j\Omega})}}).
\end{equation}
Similarly, the sinusoidal component of the response at each separating cutset node is given by:
\begin{equation}\label{eq:19a}
x_{c(i),SS}(k) = |H_{c(i)}(e^{j\Omega})|\cos({\Omega k + \angle{H_{c(i)}(e^{j\Omega})}}).
\end{equation}

Now consider finding the response at the target node using Equation (\ref{eq:2}), when the driving signal $X_c(k)$ is set to contain  the sinusoidal components of the cutset nodes' state $x_{c(i),SS}(k)$ for the sinusoidal input.  The response at target node computed in this way has a transient and a sinusoidal steady-state component.  Importantly, the sinusoidal component is identical to the sinusoidal response of the original network model (\ref{eq:1}) for the input $u(k)=\cos{\Omega k}$, i.e. the response is $x_{q^*,SS}(k)=|H_{q^*}(e^{j\Omega})|\cos({\Omega k + \angle{H_{q^*}(e^{j\Omega})}})$.

Finally, using the same argument as for other periodic inputs, one finds that  $x_{q^*,SS}(k) \le \max_{\widehat{k}=0,1,\hdots,k} x_{c(i),SS}(\widehat{k})$
for some $i=1,\hdots, m$, for all sufficiently large $k$.  However, this is only possible if $|H_{q^*}(e^{j\Omega})| \leq |H_{c(i)}(e^{j\Omega})|$ for some $i=1,\hdots, m$.

%



%



\end{proof}

\begin{remark}

The result in Theorem 2(i) is presented in a generalized form, to account for any periodic input. If the maximum of the responses over the cutset vertex set is positive, then the comparison with $0$ can be excluded in the statement for the maximum response. Similarly, if the minimum of the responses over the cutset vertices is negative, the comparison with 0 can be excluded in the statement for the minimum.
\end{remark}

\begin{remark}

The zero ($0$) terms in the comparisons in Theorem 2(i) can be excluded, if the state matrix $A$ is stochastic and further the asymptotic response is considered. That is, the upper and lower bound of the target is dependent only on the response of the cutset vertices in this case; this can be verified 
by using the fact that the row sums of ${\small \begin{bmatrix} 0&B&A_zB& \hdots A_z^{k-1}B \end{bmatrix}}$ approach unity asymptotically.

\end{remark}

The frequency-response analysis carries through to the case that $A$ is stochastic, i.e. its row sums are unity, with only a couple of slight limitations. This formalized in the following theorem:

\begin{theorem}
Consider the network model (\ref{eq:1}) in the case that $A$ is an ergodic stochastic matrix. Also, consider a target node $q^*$,  and any separating cutset $C=\{ c(1),c(2),...,c(m) \} $. Then, for each frequency $\Omega>0$, $|H_{q^*}(e^{j\Omega})| \leq |H_{c(i)}(e^{j\Omega})|$ for some $\Omega>0$. .
\end{theorem}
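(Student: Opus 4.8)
The plan is to reduce the whole statement to a single algebraic identity among transfer functions, so that the spatial comparison becomes a magnitude estimate on one row of a \emph{stable} transfer matrix; this sidesteps the two complications caused by dropping strict substochasticity. First I would record two enabling facts about an ergodic (i.e. primitive) stochastic $A$. Since $A$ is irreducible with $\rho(A)=1$ and aperiodic, $z=1$ is the \emph{unique} eigenvalue of modulus one; hence $e^{j\Omega}I-A$ is nonsingular for every $\Omega>0$, so each frequency response $H_b(e^{j\Omega})=e_b^T(e^{j\Omega}I-A)^{-1}e_s$ is finite for $\Omega>0$, and only the excluded point $\Omega=0$ sees the pole. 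Second, because $A$ is irreducible and $A_z$ is a \emph{proper} principal submatrix (it omits at least the source $s$ and the cutset vertices $C$), the strict Perron--Frobenius submatrix inequality gives $\rho(A_z)<\rho(A)=1$; equivalently, strong connectivity forces a directed path from a $C$-adjacent vertex to every vertex of $Z$, so mass always leaks out of $Z$ into $C$. Thus $A_z$ is strictly stable and $(zI-A_z)^{-1}$ is analytic on $|z|\ge 1$.

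Next I would derive the key identity. Taking one-sided $z$-transforms of the reformulation (\ref{eq:2}) with zero initial state gives $X_z(z)=(zI-A_z)^{-1}B\,X_c(z)$. Writing each scalar response as $H_b(z)U(z)$ and reading off the $q^*$ component cancels $U(z)$ and yields the exact rational identity
\begin{equation*}
H_{q^*}(z)=\sum_{i=1}^m \bigl[(zI-A_z)^{-1}B\bigr]_{q^*,i}\,H_{c(i)}(z),
\end{equation*}
valid wherever both sides are defined, in particular on $|z|=1$ with $z\neq 1$. This expresses the target response as a frequency-dependent mixture of the cutset responses, with mixing weights given by the stable subsystem $(zI-A_z)^{-1}B$.

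I would finish with a magnitude estimate on those weights at $z=e^{j\Omega}$. Using $\rho(A_z)<1$ to expand the Neumann series $(e^{j\Omega}I-A_z)^{-1}=\sum_{k\ge 0}e^{-j(k+1)\Omega}A_z^{k}$, the nonnegativity of $A_z,B$, and the triangle inequality, I get
\begin{equation*}
\sum_{i=1}^m \bigl|[(e^{j\Omega}I-A_z)^{-1}B]_{q^*,i}\bigr|
\le \sum_{i=1}^m\sum_{k\ge 0}(A_z^{k}B)_{q^*,i}
= \bigl[(I-A_z)^{-1}B\,\mathbf{1}\bigr]_{q^*}=1,
\end{equation*}
where $\mathbf{1}$ is the all-ones vector and the last equality uses that $[A_z\ B]$ is stochastic, so $A_z\mathbf{1}+B\mathbf{1}=\mathbf{1}$ and hence $(I-A_z)^{-1}B\,\mathbf{1}=\mathbf{1}$. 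Applying the triangle inequality to the identity above together with this unit bound yields $|H_{q^*}(e^{j\Omega})|\le\max_i|H_{c(i)}(e^{j\Omega})|$ for every $\Omega>0$, i.e. $|H_{q^*}(e^{j\Omega})|\le|H_{c(i)}(e^{j\Omega})|$ for some $i$, which is the claim.

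I expect the main obstacle to be precisely the step $\rho(A_z)<1$. In Theorem 2 this was automatic from $\rho(A)<1$, but here $\rho(A)=1$, so submatrix monotonicity only gives $\rho(A_z)\le 1$; the strictness must come from irreducibility of $A$ together with $Z$ being a proper subset. This strict stability is what makes the subsystem frequency response well defined at all $\Omega$ and what forces the weight-magnitude row sum to be controlled \emph{exactly} by the stochasticity identity. It is also what cleanly avoids the DC-offset contamination that would plague a direct reuse of the periodic-input argument of Theorem 2(i): under a marginally stable $A$ the steady-state node responses acquire a constant offset from the accumulating eigenvalue-one mode, which the transfer-function route discards by construction.
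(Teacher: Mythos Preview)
Your proof is correct, and it takes a genuinely different route from the paper's own argument. The paper's proof of Theorem~3 is essentially one sentence: it observes that ergodicity forces the sole unit-modulus eigenvalue to be $1$, so $H_b(e^{j\Omega})$ is finite for $\Omega>0$, and then defers entirely to the \emph{time-domain} sinusoidal steady-state argument of Theorem~2(ii). That argument drives the subsystem~(\ref{eq:2}) with the pure sinusoidal components $x_{c(i),SS}(k)$, lets the $A_z$-transient die, and then invokes the row-sum bound from Lemma~1 on $[0\;B\;A_zB\;\cdots\;A_z^{k-1}B]$ to compare steady-state amplitudes. By contrast, you never touch the time domain: you write the exact rational identity $H_{q^*}(z)=\sum_i[(zI-A_z)^{-1}B]_{q^*,i}\,H_{c(i)}(z)$ directly from the $z$-transform of~(\ref{eq:2}), and then bound the coefficient magnitudes by the Neumann-series row sum $[(I-A_z)^{-1}B\mathbf{1}]_{q^*}=1$. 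The two arguments share the same skeleton (the subsystem $(A_z,B)$ and its substochastic row structure), but your algebraic route is more self-contained: it makes explicit the step $\rho(A_z)<1$ via the Perron--Frobenius strict-submatrix inequality, which the paper leaves implicit, and it cleanly sidesteps the DC-offset bookkeeping that a literal replay of Theorem~2(i) would incur when $A$ has an eigenvalue at~$1$. The paper's approach buys brevity by reusing Theorem~2; yours buys transparency by isolating exactly which spectral fact ($\rho(A_z)<1$, not $\rho(A)<1$) is doing the work. One small comment: your equality $(I-A_z)^{-1}B\mathbf{1}=\mathbf{1}$ uses that $[A_z\ B]$ is \emph{exactly} stochastic, which is indeed implied by the paper's reformulation~(\ref{eq:2}); but the weaker bound $\le\mathbf{1}$ (which Lemma~1 establishes) would already suffice for your conclusion.
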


\begin{proof}

Since $A$ is assumed ergodic and stochastic, it has a single eigenvalue at unity with remaining eigenvalues strictly inside the unit circle.  It follows that the response of the system (\ref{eq:1}) to a sinusoidal input at a non-zero frequency $\Omega$ produces a bounded sinusoidal response at the same frequency, i.e. the frequency response is finite and well-defined at $\Omega$. With this observation, the remainder of the proof is identical to that of Theorem 2.



\end{proof}

Theorems 2 and 3 have characterized the spatial response properties of the system (\ref{eq:1}) at a particular frequency. The energy in the response over a frequency band can also be shown to exhibit a spatial falloff, by relying on Parseval's theorem and the spatial analysis of the two-norm response.  This result is formalized in the following theorem:


%
\begin{theorem}
Consider the system (\ref{eq:1}), under the assumption that $A$ is substochastic. Also, consider a separating cutset $C=c(1),c(2),...,c(m)$ (where the target node is $q^*$). Then, $\int_{\Omega_1}^{\Omega_2} |H_{q^*}(e^{j\Omega})|^2 d\Omega \leq \int_{\Omega_1}^{\Omega_2} |H_{c(i)}(e^{j\Omega})|^2 d\Omega$ for some $i=1,2,\hdots,m$, and any frequency band $[\Omega_1 \quad \Omega_2]$.
\end{theorem}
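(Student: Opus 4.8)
The plan is to reduce the frequency-band energy claim to the time-domain two-norm spatial falloff already established in Theorem 1 (the case $p=2$), using Parseval's theorem as the bridge. The central observation is that the band energy $\int_{\Omega_1}^{\Omega_2}|H_b(e^{j\Omega})|^2\,d\Omega$ at any node $b$ equals, up to a node-independent constant, the squared two-norm of the zero-state response at $b$ when the system is driven by a single, judiciously chosen input whose spectrum is supported on the band. Since the two-norm response degrades spatially across the cutset for \emph{every} input, it degrades for this particular input, and the common constant cancels in the resulting inequality.

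First I would construct the driving input. Working in the infinite horizon, I would take a real input $u(\cdot)$ whose discrete-time Fourier transform $U(e^{j\Omega})$ satisfies $|U(e^{j\Omega})|^2=1$ for $\Omega\in[\Omega_1,\Omega_2]\cup[-\Omega_2,-\Omega_1]$ and $|U(e^{j\Omega})|^2=0$ elsewhere on $[-\pi,\pi]$ (an ideal band-pass signal). Because $A$ is strictly substochastic and irreducible, the system is asymptotically stable, so each node response $x_b(\cdot)$ lies in $l_2$ and has DTFT $H_b(e^{j\Omega})U(e^{j\Omega})$. Parseval's theorem then gives
\begin{equation*}
\sum_{k=0}^{\infty}|x_b(k)|^2 = \frac{1}{2\pi}\int_{-\pi}^{\pi}|H_b(e^{j\Omega})|^2|U(e^{j\Omega})|^2\,d\Omega = \frac{1}{\pi}\int_{\Omega_1}^{\Omega_2}|H_b(e^{j\Omega})|^2\,d\Omega,
\end{equation*}
where the last equality uses that $|H_b(e^{j\Omega})|^2$ is even in $\Omega$ for a real system. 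Crucially, the factor $1/\pi$ is the same for every node $b$.

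Next I would invoke the spatial falloff of the two-norm response. Inequality (\ref{eq:13}) in the proof of Theorem 1, specialized to $p=2$, holds for \emph{any} fixed input over $[0,k_f]$; letting $k_f\to\infty$ (justified because the response sequences are summable, as in Remark 1) yields $\sum_k|x_q(k)|^2 \le \sum_{j\in N(q)}W_j\sum_k|x_j(k)|^2$ for every node $q\ne s$. Running the same maximizing-neighbor iteration as in Theorem 1 back along upstream edges from $q^*$ — which must cross $C$, since every directed $s$-to-$q^*$ path passes through the cutset — produces a cutset vertex $c(i)$ with $\sum_k|x_{q^*}(k)|^2 \le \sum_k|x_{c(i)}(k)|^2$ for the band-pass input. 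Substituting the Parseval identity on both sides and cancelling the common constant $1/\pi$ gives exactly $\int_{\Omega_1}^{\Omega_2}|H_{q^*}(e^{j\Omega})|^2\,d\Omega \le \int_{\Omega_1}^{\Omega_2}|H_{c(i)}(e^{j\Omega})|^2\,d\Omega$ for some $i$, as claimed.

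I expect the main obstacle to be the infinite-horizon bookkeeping rather than any new idea: the ideal band-pass input is not finitely supported (indeed not causal FIR), so I must argue carefully that (i) the responses genuinely lie in $l_2$ with convergent partial sums, and (ii) the finite-horizon inequality (\ref{eq:13}) passes to the limit $k_f\to\infty$ without loss. Both follow from strict substochasticity together with irreducibility (hence stability), but they should be stated to make the Parseval step rigorous. A secondary point requiring care is the evenness of $|H_b|^2$, which collapses the two-sided band integral to twice the one-sided integral with the same constant for every node; this is precisely what lets the proportionality constant cancel cleanly across the inequality.
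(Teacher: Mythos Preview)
Your proposal is correct and follows essentially the same route as the paper: specialize the time-domain inequality (\ref{eq:13}) to $p=2$, iterate to the cutset, pass to the infinite horizon, apply Parseval, and plug in a band-limited input whose spectrum is the indicator of $[\Omega_1,\Omega_2]$. The only cosmetic difference is that the paper uses the complex input $u(k)=\tfrac{1}{\pi k}\sin(Wk)e^{jW_0 k}$ with a one-sided spectrum rather than a real input plus the evenness argument, and your discussion of the infinite-horizon and non-causal-input bookkeeping is in fact more careful than the paper's.
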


\begin{proof}

The result is proved based on the norm inequality (\ref{eq:13}), for the special case where $p=2$. Equation (\ref{eq:13}) in this case is give by:

\begin{equation}\label{eq:20}
\sum_{k = 0}^{k_f}|x_{q}(k)|^2 \leq \sum_{j \in N(q)} W_{j}\sum_{k = 0}^{k_f} |x_{j}(k)|^2
\end{equation}

Now, consider the target vertex $q^*$ and the vertex cutset $C = c(1), c(2), \hdots, c(m)$. Using similar arguments to those used in the proof of theorem 1, we can obtain the following inequality:
\begin{equation}\label{eq:21}
\sum_{k = 0}^{k_f}|x_{q^*}(k)|^2 \leq \sum_{k = 0}^{k_f} |x_{c(i)}(k)|^2,
\end{equation}
for some $i=1,\hdots, m$.  The inequality holds for any input $u(k)$, and any $k_f$. Considering $k_f \to \infty$ and then applying Parseval's theorem (which holds since the system is asymptotically stable), Equation (\ref{eq:21}) can be written in the frequency domain as follows:

\begin{equation}\label{eq:22}
\frac{1}{2\pi} \int_{-\pi}^{\pi}|X_q(e^{j\Omega})|^2d\Omega \leq \frac{1}{2\pi} \int_{-\pi}^{\pi}|X_{c(i)}(e^{j\Omega})|^2d\Omega
\end{equation}

where we use the notation $X_i(e^{j\Omega})$ as the Fourier transform of the signal $x_i(k)$. 
Using the relationship $|X_i(e^{j\omega})| = |H_i(e^{j\Omega})||U(e^{j\Omega})|$. Equation (\ref{eq:22}) can be rewritten as:

%
\begin{equation}\label{eq:23}
\int_{-\pi}^{\pi}|H_q(e^{j\Omega})|^2|U(e^{j\Omega})|^2 d\Omega \leq \int_{-\pi}^{\pi}|H_{c(i)}(e^{j\Omega})|^2|U(e^{j\Omega})|^2 d\Omega
\end{equation}

As Equation (\ref{eq:23}) holds for any input, let us choose the input to be  $u(k) = \frac{1}{\pi k}\sin(Wk)e^{jW_0k}$. The Fourier transform of this input, $U_1(e^{j\Omega})$, is unity for $|\Omega-W_0| <W$ and zero otherwise.  Choosing $W_0-W=\Omega_1$ and $W_0+W=\Omega_2$ and substituting into (\ref{eq:23}), we find:
\begin{equation}\label{eq:25}
\int_{\Omega_1}^{\Omega_2} |H_{q^*}(e^{j\Omega})|^2 d\Omega \leq \int_{\Omega_1}^{\Omega_2} |H_{c(i)}(e^{j\Omega})|^2 d\Omega
\end{equation}
where $c(i)$ is a vertex in vertex cutset $C$.

\end{proof}

\begin{remark}
Since Equation (\ref{eq:23}) is valid for any input, the spatial degradation pattern also holds for any frequency-domain signal which is filtered by the network dynamics. Specifically, $\int_{\Omega_1}^{\Omega_2}|H_q(e^{j\Omega})|^2|F(e^{j\Omega})|^2d\Omega \leq \int_{\Omega_1}^{\Omega_2}|H_{c(i)}(e^{j\Omega})|^2|F(e^{j\Omega})|^2d\Omega$ for the target node $q^*$ and some $c(i)$ in vertex cutset $C$, where $F(e^{j\Omega})$ is an arbitrary function which is filtered by the network.
\end{remark}

Now, we show that the Markov parameters also exhibit a spatial degradation:

\begin{theorem}
Consider the Markov parameters for  the system (1), in the cases that the output is taken at the target vertex $i=q^*$ and at  vertices on a separating cutset $C=\{ c(1),c(2),...,c(m) \}$. The Markov parameters satisfy the following inequality: for all $k$,  $M_{q^*}(k)\leq \max_{\widehat{k}=0,1,\hdots, k}M_{c(i)}(\widehat{k})$ for some $i=1,2,\hdots,m$.
\end{theorem}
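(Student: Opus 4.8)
The plan is to recognize the Markov parameter $M_{q^*}(k)=e_{q^*}^TA^ke_s$ as the impulse response of the model (\ref{eq:1}): if the input is set to $u(0)=1$ and $u(k)=0$ for all $k\geq 1$, then the zero-state response at any vertex $b$ satisfies $x_b(k)=e_b^TA^ke_s=M_b(k)$. Characterizing the Markov parameters therefore reduces to comparing the impulse responses at $q^*$ and at the cutset vertices $c(i)$, which is directly amenable to the cutset reformulation of Lemma 1.

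First I would note that because $A$ is (sub)stochastic and hence entrywise nonnegative, every Markov parameter is nonnegative: $M_b(k)=e_b^TA^ke_s\geq 0$ for all $b$ and $k$. This nonnegativity is precisely the feature distinguishing the Markov-parameter statement from Theorem 2(i); it is what allows us to drop the comparison with $0$ and bound the target response purely in terms of the cutset responses.

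Next I would invoke the cutset representation (\ref{eq:17}) used in the proof of Theorem 2, which expresses the target-partition state at time $k$ as $X_z(k)={\small \begin{bmatrix} 0&B&A_zB&\hdots A_z^{k-1}B \end{bmatrix}}{\small \begin{bmatrix} X_c(k)\\\vdots\\X_c(0)\end{bmatrix}}$, with the row corresponding to $q^*$ extracted as in (\ref{eq:18}). Since this representation follows from solving (\ref{eq:2}) rather than from any periodicity assumption, it applies verbatim to the impulse input. The coefficient row vector has nonnegative entries (as $A_z,B\geq 0$) whose sum is at most $1$ by Lemma 1, so $x_{q^*}(k)$ is written as a combination, with nonnegative weights summing to at most unity, of the cutset states $x_{c(i)}(\widehat{k})=M_{c(i)}(\widehat{k})$ over $\widehat{k}=0,\ldots,k$.

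Finally, since all the cutset Markov parameters $M_{c(i)}(\widehat{k})$ are nonnegative, a combination of them with nonnegative weights summing to at most one cannot exceed their maximum value; hence $M_{q^*}(k)=x_{q^*}(k)\leq \max_{i,\widehat{k}}M_{c(i)}(\widehat{k})$. Choosing the index $i$ that maximizes $\max_{\widehat{k}=0,\ldots,k}M_{c(i)}(\widehat{k})$ then yields $M_{q^*}(k)\leq \max_{\widehat{k}=0,\ldots,k}M_{c(i)}(\widehat{k})$ for that $i$, as claimed. I do not expect a serious obstacle here: the only points requiring care are verifying that the weights along the specific row selected for $q^*$ genuinely sum to at most $1$ (exactly the row-sum bound of Lemma 1), and observing that the nonnegativity of the Markov parameters is what permits the clean maximum bound without an auxiliary comparison to $0$.
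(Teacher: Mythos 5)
Your proposal is correct and follows essentially the same route as the paper: both identify the Markov parameters with the impulse response and then apply the cutset representation of Equation (\ref{eq:18}) together with the row-sum bound from Lemma 1. In fact your argument is slightly more careful than the paper's, since you explicitly note that the nonnegativity of $A$ makes the impulse responses nonnegative, which is what justifies dropping the comparison with $0$ that appears in Theorem 2(i).
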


\begin{proof}

The proof of theorem 3 can be derived from the proof of theorem 2. Notice that equation (\ref{eq:18}) holds for any input. So, for any input it follows that $x_{q^*}(k)\leq \max_{k}x_{c(i)}(k)$ for all $k$ and some $i=1,2,\hdots,m$ in vertex cutset $C$. If the system is initially relaxed and driven by an impulse input, the response of target node $q^*$ can be written as $x_{q^*}(k) = e_{q^*}A^{k-1}e_s = M_{q^*}(k-1)$. Similarly, the response in a vertex cutset node $c(i)$ can be expressed as $x_{q^*}(k) = e_{c(i)}A^{k-1}e_s = M_{c(i)}(k-1)$. Form this, it is immediate that $M_{q^*}(k)\leq \max_{k}M_{c(i)}(k)$ for some $i=1,2,\hdots,m$.

\end{proof}

We have focused in this short article on the single-input single-output system (\ref{eq:1}), with the goal of understanding pairwise transfer characteristics in network synchronization or diffusion processes.  However, it turns out that the spatial pattern of responses in the synchronization model holds even when inputs are applied at multiple network nodes. To formalize this notion, we consider a modified model where inputs are applied at multiple nodes:
\begin{eqnarray}
& & X(k+1) = AX(k) + BU(k) \label{eq:26}\\
& &y(k) = e_{i}^TX(k) \nonumber
\end{eqnarray}
where $X(k)$, $A$, and $e_i$ are defined as before in Equation (\ref{eq:1}), $U(k)=\begin{bmatrix} u_1(k) \\ \vdots \\ u_{\hat{m}}(k) \end{bmatrix}$ is a vector of inputs, and  $B$ is a $n \times \hat{m}$ matrix whose columns are $0--1$ indicator vectors for a set of input nodes in the network (input vertices in the network's graph). 

As an illustration, we characterize the $l_p$ gain for the multi-input case.  This requires a redefinition of the gain concept:

\begin{definition}
The {\bf $l_p$ gain} $G_p(i,k_f)$ for the system (\ref{eq:26}) over the time horizon $k_f$, with output taken at node $i$, is defined as the  maximum of the quantity $\left[\sum_{k=0}^{k_f} |y(k)|^p\right]^\frac{1}{p}$ over all inputs $U(0),\hdots, U(k_f-1)$ such that $\left[\sum_{k=0}^{k_f} \sum_{j=0}^{\hat{m}} |u_j(k)|^p\right]^\frac{1}{p}=1$.
\end{definition}

The $l_p$ gains of the system in equation (\ref{eq:26}) also show a spatial falloff defined by cutsets in the network graph, as formalized in the following theorem:

\begin{theorem}
Consider the $l_p$ gain $G_p(q^*,k_f)$ of the system (\ref{eq:26}) for a target node $q^*$ and a time horizon $k_f$. Also, assume that $C=\{ c(1),c(2),...,c(m) \}$ is a separating cutset of the network, in the sense that all directed paths in $\Gamma$ from each input vertex to the target vertex pass through at least one of these vertices. Then, the $l_p$ gain of a cutset node majorizes the $l_p$ gain of the target, i.e. $G_p(q^*,k_f)\leq G_p(c(i),k_f)$ for some $i=1,2,\hdots,m$.
\end{theorem}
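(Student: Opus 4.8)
The plan is to reduce this multi-input statement to the single-input argument of Theorem~1, by exploiting one structural consequence of the separating-cutset hypothesis. The essential observation is that, because every directed path from every input vertex to $q^*$ meets $C$, the target partition $Z$ (defined as the set of vertices $v\notin C$ for which every directed path from an input vertex to $v$ passes through $C$) contains no input vertex, and moreover every upstream neighbor of a vertex in $Z$ lies in $Z\cup C$. The first property holds since the length-zero path from an input vertex to itself avoids $C$, so an input vertex cannot satisfy the defining condition of $Z$; the second holds since an upstream neighbor $w\notin Z\cup C$ of some $v\in Z$ would itself be reachable from an input vertex without meeting $C$, and prepending the edge $w\to v$ would produce such a path to $v$, contradicting $v\in Z$. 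Consequently, the reformulated dynamics (\ref{eq:2}) and Lemma~1 carry over verbatim to the multi-input system, because the $Z$-block of the state equation contains no direct input term.

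With this in hand, I would treat the two regimes exactly as in Theorem~1. For $1\le p<\infty$, every vertex $q\in Z$ obeys the purely diffusive recursion (\ref{eq:8}) with no input term, so the convexity step and the telescoping sum yield the per-node inequality (\ref{eq:14}), namely $G_p(q,k_f)^p\le \sum_{j\in N(q)}W_j\,G_p(j,k_f)^p$ with $W_j\ge 0$, $\sum_j W_j\le 1$, and all neighbors $j\in N(q)\subseteq Z\cup C$. Since the underlying inequality (\ref{eq:13}) holds pointwise for every input realization, taking the supremum over the admissible input set (now normalized by the aggregate quantity of Definition~1) passes through to the gains, so (\ref{eq:14}) is valid under the redefined multi-input gain. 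For $p\to\infty$, I would take the $\infty$-norm of both sides of (\ref{eq:3}) and invoke Lemma~1, whose row-sum bound gives $\|Q_{k_f}\|_\infty\le 1$ and hence $\|[X_z(0)^T\cdots X_z(k_f)^T]^T\|_\infty\le \|[X_c(k_f)^T\cdots X_c(0)^T]^T\|_\infty$, so that $G_p(q^*,k_f)\le G_p(c(i),k_f)$ for some $i$.

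The delicate step is making the finite-$p$ upstream propagation terminate on the cutset rather than cycle indefinitely inside $Z$. I would argue by contradiction through a global maximum. Set $M=\max_i G_p(c(i),k_f)$ and suppose $G^*:=\max_{q\in Z}G_p(q,k_f)>M$. Applying (\ref{eq:14}) to a maximizer $q_0$ and bounding each neighbor gain by $G^*$ (cutset neighbors are bounded by $M<G^*$) forces equality throughout: $\sum_j W_j=1$, every upstream neighbor of $q_0$ lies in $Z$ with gain exactly $G^*$, and $q_0$ has no neighbor in $C$. Iterating this equality condition shows that the set $T=\{q\in Z:\,G_p(q,k_f)=G^*\}$ is closed under taking upstream neighbors and receives no upstream edges from $C$ or from any input vertex. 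Hence the $T$-states evolve autonomously from rest with zero input, giving $x_q(k)\equiv 0$ and so $G^*=0$, which contradicts $G^*>M\ge 0$. Therefore $G_p(q^*,k_f)\le G^*\le M$, establishing the claim. I expect this closed-invariant-set argument to be the main obstacle, since it is precisely where the single-input proof's informal ``iterate until the cutset'' step must be replaced by a rigorous decoupling argument that remains valid for arbitrarily many input vertices.
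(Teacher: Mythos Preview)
Your proposal is correct and follows essentially the same approach as the paper, which simply states that the proof is ``nearly identical to that of Theorem~1'' and omits the details. Your explicit verification that the target partition $Z$ contains no input vertex (so that Equation~(\ref{eq:2}) and Lemma~1 carry over unchanged) is exactly the observation the paper suppresses, and your global-maximum contradiction argument for the finite-$p$ case is in fact a cleaner and more rigorous replacement for the paper's informal ``iterate toward the cutset'' step in Theorem~1.
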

\begin{proof}

The proof of Theorem 4 is nearly  identical to that of Theorem 1, hence we omit the details.

%
%
%
%



\end{proof}

\section{Simulations}

The spatial fall-off in the $l_p$ gains (Theorem 1) is illustrated in an example.  A network with $9$ nodes with the following substochastic state matrix is considered:
\begin{equation*}
{\small A=\begin{bmatrix}
0&0&0&0&0&0&0.2&0.3&0.25\\
0&0&0&0&0&0&0.35&0.25&0.2\\
0&0&0&0&0&0&0.15&0.25&0.45\\
0.2&0.4&0.35&0&0&0&0&0&0\\
0.2&0.15&0.25&0&0&0&0&0&0\\
0.2&0.45&0.15&0&0&0&0&0&0\\
0&0&0&0.25&0.25&0.15&0&0&0\\
0&0&0&0.3&0.35&0.25&0&0&0\\
0&0&0&0.25&0.35&0.1&0&0&0\\
\end{bmatrix}}
\end{equation*}
The network's digraph, which has 27 directed edges, is shown in Figure 2.  The source vertex (Vertex 1) is also indicated.

	\begin{figure}[hbt!]
    	\centering
    	\includegraphics[width=0.48\textwidth]{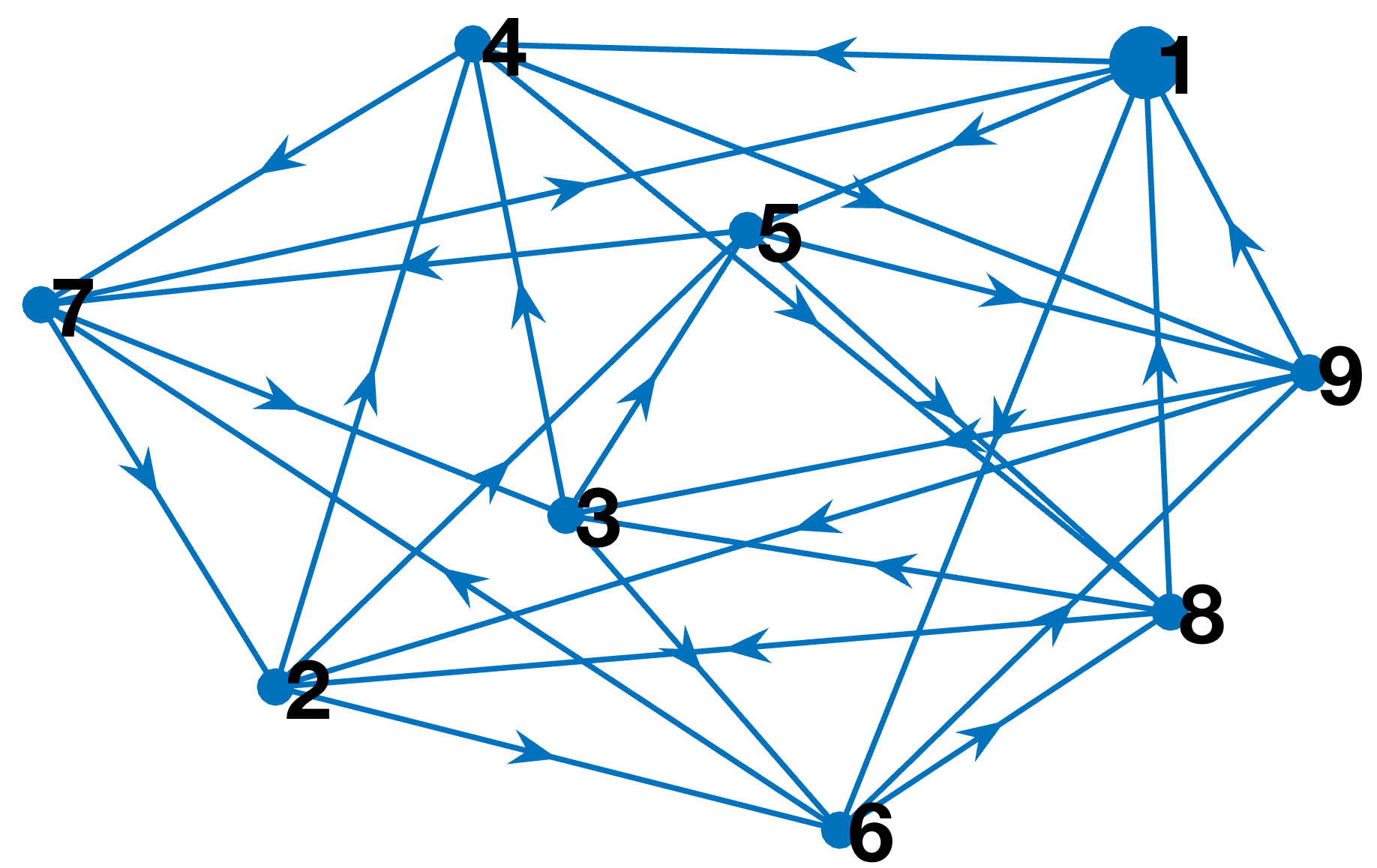}
    	\caption{The digraph of the example network.}
    	\label{fig:2}
	\end{figure}

To illustrate Theorem 1, the $l_1$ gain is computed when each node in the network is selected as the target.  These gains are plotted on top of the network's graph in Figure 3.  The plot shows the decrescence in the gains along cutsets in the network graph.  As one example, the $l_1$ gain when vertex $9$ is the target is upper-bounded by the maximum among the $l_1$ gains when vertices $4$, $5$, and $6$ are chosen as targets. Since these vertices form a separating cutset, the example matches with the formal result.

	\begin{figure}[!htb]
    	\centering
    	\includegraphics[width=0.48\textwidth, height=0.27\textwidth ]{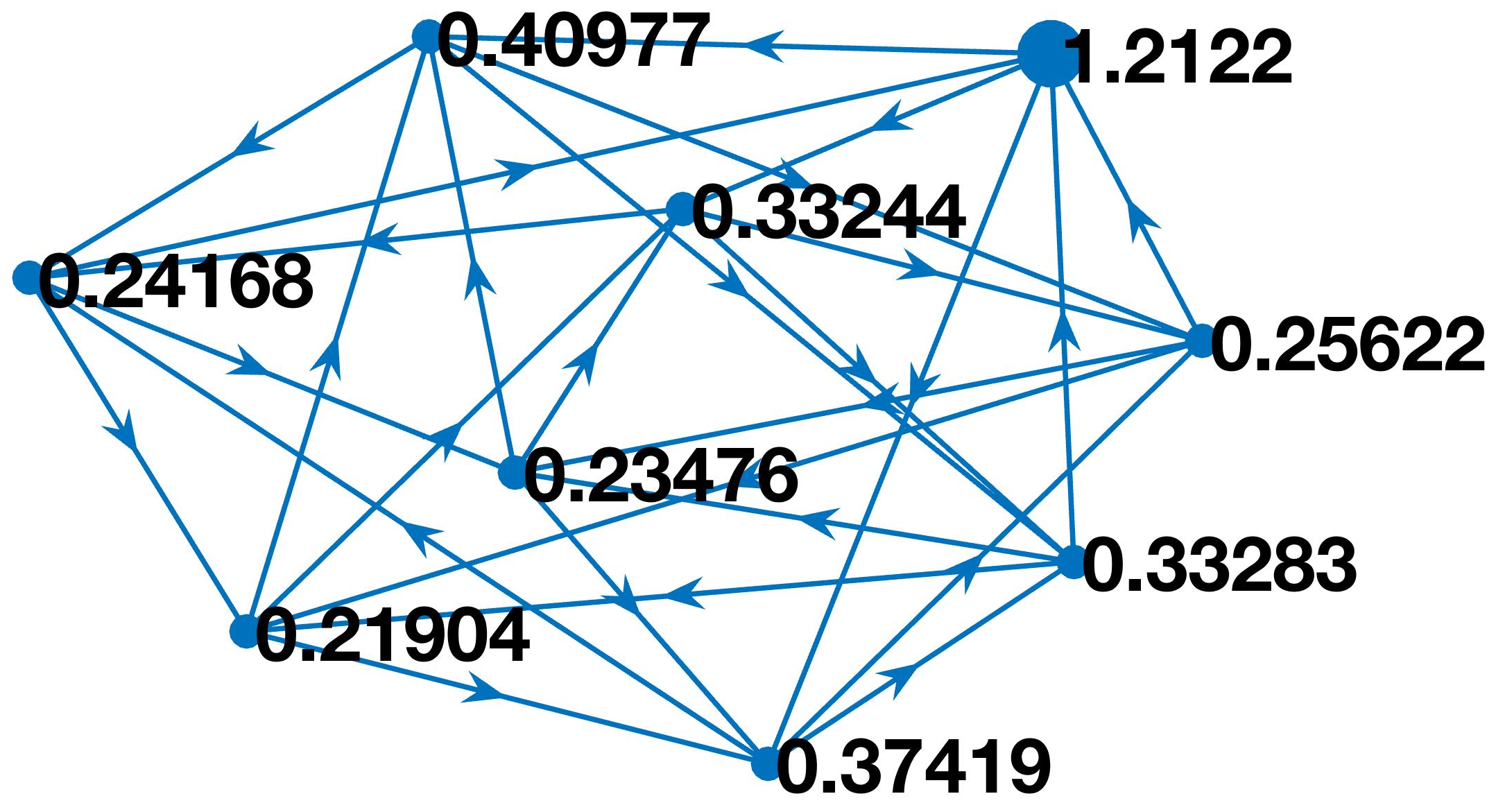}
    	\caption{The $l_1$ gains for different target nodes.}
    	\label{fig:4}
	\end{figure}

The decrescence of $l_p$ gains along cutsets also immediately implies that the maximum gain among target vertices at a certain distance from the source is a non-increasing function of the distance.  This is true because the set of vertices at each distance forms a separating cutset between the source vertex and more distant vertices.  Thus, it is insightful to calculate maximum $l_p$ gains for target vertices at different distances.  We have done so for the $1$, $2$, and $\infty$ gains in Table I. Each $l_p$ gain is highest at the source (distance $0$), and decreases with distance from source.

\begin{table}[!htb]
    \centering
\begin{tabular}{|c|c|c|c|}
    \hline
     Distance & Max $l_1$ Gain & Max $l_2$ Gain & Max $l_\infty$ Gain\\
     \hline
     0  &  1.2122 &    1.1676 &   1.2122\\
     \hline
     1 &    0.4098 &   0.3654  &  0.4098\\
     \hline
     2 &    0.3328  &  0.2994  &  0.3328\\
     \hline
     3 &    0.2348 &   0.2113  &  0.2348\\
     \hline
\end{tabular}
\caption{Decrescence of $l_p$ gains with distance from the source.}
\end{table}

\section{Applications of the Spatial Pattern Analysis}

Two brief vignettes are presented, to illustrate potential applications of the spatial analyses developed in the short paper.  The first is concerned with signal recovery from remote measurements in a network, while the second is concerned with definitions of (external) propagation stability in networks.  

\subsection{Signal Recovery and Signal-to-Noise Ratios in Diffusive Networks}

The problem of estimating or detecting an external input to a network synchronization or diffusion process from noisy remote measurements arises in several settings\cite{ye2018optimal}.  For instance, it may be necessary to locate and characterize a pollution source in a complex environment, based on localized measurements of the diffusing pollutant.  Similarly, it may be of interest to identify a stubborn or manipulative agent in a network opinion dynamics, based 
on imperfect measurements of certain agents' opinions.  Standard machinery for detection/estimation (e.g., hypothesis testing, Wiener filtering) can be brought to bear for these problems.  In turn, the statistical performance of the recovery technique can be characterized, and used to develop algorithms for sensor placement.  However, these analyses -- particularly ones for performance analysis and sensor placement  -- can become challenging to develop and/or computationally infeasible, particularly when the network is high dimensional or incompletely known.  For these reasons, graph-theoretic insights into network signal estimation/detection performance are valuable.

Broadly, the performance of estimators and detectors is often closely tied with the signal-to-noise ratio (SNR) in the measured signal, i.e. the ratio between the energy/power of the signal of interest and that of the noise.  The characterizations of input-output responses developed in this article imply that signal-to-noise ratios also degrade spatially in a network away from a signal source, and hence suggests that detection/estimation performance degrades monotonically away from the source.  

To present the concept more explicitly, let us consider the network dynamics (\ref{eq:1}) when the input $u(k)$ at the source node is an unknown signal, and measurements at a remote target node are subject to an additive zero-mean white noise with variance $\sigma^2$.  That is, we assume that measurements taken at a node $i$ are of the form $\overline{y}(k)=x_i(k)+v(k)$, where $v(k)$ is a zero-mean white noise signal.  

In the case where the the input $u(k)$ is a finite-energy or transient signal, it is natural to define the SNR at the target node as the ratio between the energy (or other norm) of the response signal $x_i(k)$ and the noise intensity, e.g. $SNR=\frac{||x_i(k)||_p}{\sigma}$.  From Theorem 1, it is immediate that the SNR for any input signal (and the best-case SNR among all inputs with a certain energy/norm) exhibits the spatial degradation pattern defined by graph cutsets.  That is, as compared to the SNR at a target node, the SNR will be higher for at least one node on a cutset separating the source and target.  Thus, estimation or detection performance should be improved on the separating cutset, i.e. at nodes closer to the source.

In the case where the input $u(k)$ is a persistent signal, the SNR is typically defined in terms of the power of the response signal $x_i(k)$
and the noise.  A typical case may be that the input and hence the the response signal $x_i(k)$ is a stochastic band-limited signal.  In this case, the total power in $x_i(k)$ can be determined as the integral of the power spectrum over the band.  Thus, the SNR can be found as:
$SNR=\frac{\int_{\Omega}|H_i(e^{j\Omega})|^2 S_{UU}(\Omega)| d\Omega}{\sigma^2}$, where $S_{UU}(\Omega)$ is the power spectrum of the input signal $u(k)$.  From Theorem 3 and the following remark (Remark 2), we immediately see that the SNR in this case also exhibits a spatial degradation along cutsets away from the source.  Thus, estimation or detection is again seen to be improved at cutsets near the source node.

We stress that the spatial degradation of signal strengths and SNRs holds not only for the $2$-norm but for all other $p$-norms, which are relevant for a number of signal reconstruction and detection problems \cite{pnormsignal}.

\subsection{Network Propagation Stability Analysis}

The stability analysis of network synchronization models has been primarily focused on internal (Lyapunov) stability notions, which capture emergence of a synchronous state \cite{renbeard,pecora,li2006global}. However, in many application areas, the responses of network processes to exogenous disturbances is of substantial interest \cite{lu2011stabilization,liu2013input,siami2016fundamental}.  With this motivation,  a number of studies have considered external stability of network synchronization models \cite{gchen,wangoutput1}. These studies generally define external stability as a bounded-input bounded-output (BIBO) stability notion, with the assumption that exogenous (disturbance) inputs are present at any node or a set of leader nodes, and the output is the full network's state or projections thereof.

For many network processes, a primary concern is whether a disturbance at one node can amplify and propagate across the network, or whether alternately it remains localized. For instance, power-system operators recognize that oscillatory disruptions originating from generator controls can couple with the system's natural modes to cause system-wide oscillations. The propagation or amplification of a disturbance across a network is distinct from both the internal stability and external stability concepts. This distinction motivates alternative definitions of stability for networks, concerned with disturbance propagation.

In fact, propagation of disturbances in cascaded systems and bidirectional line networks has been extensively studied, under the heading of {\em string stability} \cite{peppard1974string,7879221,swaroop1996string}. Several subtly different definitions of string stability have been proposed. Broadly, the definitions require that a disturbance originating at one location in the string can only be amplified by a specified finite gain for any output location, regardless of the length of the string.  A stronger notion, referred to as {\em strict string stability},  requires that the disturbance response is diminished (scaled down) at each subsequent node in the string \cite{ploeg2013lp,rogge2008vehicle}. The string stability notion has been extended for other network topologies in a couple of ways.  First, there is a considerable literature on {\em mesh stability}, which generalizes the string-stability concept to general directional networks \cite{983389,9449895}. Also, an important recent study by Studli and co-workers \cite{studli2017vehicular} has considered generalization of string stability concepts to more general networks.  This study provides a parallel definition to string stability for general networks, as well as a parallel to strict string stability for tree-like networks, although without characterizations for particular network models.   Other input-to-state stability concepts with a similar flavor, which generalize mesh stability, have also recently been developed for networks with general graph topologies and nonlinear nodal dynamics \cite{8370724}.

We argue that our spatial input-output analysis immediately suggests a definition for strict stability with respect to disturbance propagation, for general network processes.  Specifically, a network dynamics can be viewed as strictly propagation stable, if responses degrade along cutsets away from the source.  This notion is formalized in the following definition:

\begin{definition}
Consider a dynamical network process where an input ${\bf u}(k)$ is applied at a source node $s$, and the state responses ${\bf y}_{q^*}(k)$ are considered at target nodes $q^*$ in the network.  The network is ${\cal L}_{p,t}$ strictly propagation stable if, for every source node $s$, target node $q^*$, separating cutset $C=\{ c(1),c(2),...,c(m) \}$, and finite-$p$-norm input ($||u(k)||_p < \infty$), it holds that $||{\bf y}_{q^*}(k)||_t \leq 
||{\bf y}_{c(i)}(k)||_t$ for some $i=1,\hdots,m$.
\end{definition}

The definition for propagation stability can easily be rephrased in terms of paths in the network's graph.  In particular, a network is ${\cal L}_{p.t}$ strictly propagation stable if and only if, for every source vertex $s$, target vertex $q^*$, and finite-$p$-norm input, there exists a path from $s$ to $q^*$ in the network graph such that the $t$-norm of the response at the corresponding network nodes is decreasing along the path.  From this phrasing, it is evident that the concept reduces to the standard notions for strict stability for strings and tree graphs.  The statement also clarifies that the definition is flexible to the complex propagations that may occur in networks, in that decrescence is only required along one path  between a source and target node (not all paths).

From the proof of Theorem 1, the scalar network synchronization model (\ref{eq:1}) is immediately seen to be ${\cal L}_{p,t}$ strictly propagation stable.  Thus, we see that disturbances in this model are localized/diminishing in the sense of the strict propagation stability definition, regardless of the specific network topology. For more complex synchronization models involving multivariate nodal dynamics, one anticipates that the node or subsystem model and the network graph will together determine whether or not strict propagation stability holds.

\section{Conclusions}

This short note has focused on the input-to-output analysis of a canonical discrete time network synchronization model.
Our key finding is the a family of input-to-output metrics (e.g. $l_p$ gains, frequency responses, frequency-band energy, Markov parameters) show a spatial degradation across cutsets away from the input location.  Two applications of the analysis have been briefly developed: 1) characterization of signal-to-noise ratios in network processes, and 2)  definition of propagation stability notions for dynamical networks. Of interest, the spatial analyses carry through to a number of time-varying and nonlinear processes.  We expect to address these cases in future work.


\bibliographystyle{IEEEtran}

\bibliography{main}

\end{document}